\definecolor{light-gray}{gray}{0.9}
\newtheorem{definition}{Definition}%
\newcommand{\pbDef}[3]{%
\noindent
\begin{center}
\begin{boxedminipage}{0.98 \columnwidth}
#1\\[5pt]
\begin{tabular}{l p{0.75 \columnwidth}}
Input: & #2\\
Question: & #3
\end{tabular}
\end{boxedminipage}
\end{center}
}
		\newcommand{\jr}{\ensuremath{\mathit{JR}}\xspace}
				\newcommand{\bt}[1][]{\ensuremath{\ifthenelse{\equal{#1}{}}{\mathit{BT}}{\mathit{BT}(#1)}}\xspace}
			\newcommand{\pref}{\ensuremath{\succsim}}
	\newtheorem{lemma}{Lemma}%
	\newtheorem{proposition}{Proposition}%
	\newtheorem{example}{Example}
	\newlength{\wordlength}
\newcommand{\nbh}[1][]{
	\ifthenelse{\equal{#1}{}}{\nu}{\nu(#1)}
}
\newcommand{\cstr}[1][]{
	\ifthenelse{\equal{#1}{}}{\mathscr S}{\cstr(#1)}
}
\newcommand{\choice}[1][]{
	\ifthenelse{\equal{#1}{}}{\mathit{C}}{\choice(#1)}
}
\begin{document}

	 
	  
		  
		  \title{Sub-committee Approval Voting \\and Generalised Justified Representation Axioms}

	\author{Haris Aziz and Barton Lee} \ead{haris.aziz@data61.csiro.au}
	
	\address{Data61, CSIRO and UNSW,\\ Sydney, Australia}
	
%

\begin{abstract}
Social choice is replete with various settings including single-winner voting, multi-winner voting, probabilistic voting, multiple referenda, and public decision making. We study a general model of social choice called Sub-Committee Voting (SCV) that simultaneously generalizes these settings. 
We then focus on sub-committee voting with approvals and propose extensions of the justified representation axioms that have been considered for proportional representation in approval-based committee voting. We study the properties and relations of these axioms. For each of the axioms, we analyse  whether a representative committee exists and also examine the complexity of computing and verifying such a committee. 
	\begin{keyword}
 Social choice theory \sep
justified representation \sep
committee voting \sep 
voting
	\end{keyword}
\end{abstract}

\maketitle

\section{Introduction}

Social choice is a general framework of preference aggregation in which voters express preferences over outcomes and a desirable outcome is selected based on the preferences of the voters~\citep{ABES17a,Coni10a}. The most classic model of social choice is \emph{(single winner) voting} in which voters express preferences over a set of alternatives and exactly one alternative is selected~\citep{BrFi02a}. A natural generalization of the model is \emph{muti-winner voting} or \emph{committee voting} in which a set of alternatives is selected~\citep{FSST17a}. Another model is \emph{multiple referenda} in which voters vote over multiple but independent binary decisions~\citep{BKZ97a}. Probabilistic versions of single-winner voting have also been examined~\citep{Gibb77a}. 

In this paper, we study a natural model of social choice that simultaneously generalizes all the social choice settings mentioned above. The advantage of considering a more general combinatorial model~\citep{LaXi15a} is that instead of coming up with desirable axioms, rules, and algorithms in a piecemeal manner for different settings, one can design or apply general principles and approaches that may be compelling for a wide range of settings. Of course certain axioms may only be meaningful for a certain subsetting but as we show in this paper, a positive algorithmic or axiomatic result for well-justified axioms can be viewed favourably for all relevant sub-settings as well.
 Another advantage of formalising a general model is that it provides an opportunity to unify different strands of work in social choice. 
Our model also helps approach the committee voting problem in which there are additional diversity constraints possibly relating to gender, race or skill.  
Finally, our model applies to general participatory budgeting scenarios~\cite{Caba04a} where multiple decisions needs to be made and the minority representation needs to be protected.

After formalizing the SCV (sub-committee voting) setting, we focus on a particular restriction of SCV in which agents or voters only express approvals over some of the alternatives or candidates. 
The restriction to approvals is desirable because approvals capture dichotomous/binary preferences that are prevalent in many natural settings. Secondly, ordinal and cardinal preferences coincide when preferences are dichotomous. This is desirable since elicitation of cardinal utilities has been considered controversial in decisions concerning public goods. 

SCV with approvals can be viewed as a multidimensional generalization of approval-based committee voting~\citep{Kilg10a}.
For approval-based committee voting, a particularly appealing axiom that captures representation is justified representation (\jr) that requires that a set of voters that is large enough and cohesive enough in their preferences should get at least one approved candidate in the selected committee. The axiom has received considerable attention~\citep{BFJL16a,SFF+17a,SFF16a}. 
For SCV with approvals, we extend the justified representation axiom~\citep{ABC+16a} that has only been studied in the context of committee voting.

 One interesting application captured by this SCV framework, which is not possible under standard models, is committee voting in the presence of diversity constraints or quotas. Considering this application highlights the conflict between diversity constraints and the original axioms of fair, or justified, representation. As will be shown this conflict leads to conceptual issues of what is the `appropriate' generalisation of the \jr axiom for SCV instances and also technical issues such as existence and computational intractability of achieving certain axioms whilst diversity constraints are enforced. 

\paragraph{Contributions}

Our contributions are threefold with the first two being conceptual contributions. 

Firstly, we study a natural model of social choice called SCV (sub-committee voting) that simultaneously generalizes several previously studied settings.

Secondly, we focus on approval-based SCV and present new notions of justified representation (\jr) concepts including Intra-wise JR (IW-JR) and Span-wise JR (SW-JR). These distinct notions lead to `local' and `global' approaches to representation, respectively. 

Thirdly, we present technical results concerning the extent to which these properties can be satisfied. We show that although SW-JR is a natural extension of JR to the SCV setting, a committee satisfying SW-JR may not exist even under severe restrictions. Furthermore, checking whether there exists a committee satisfying SW-JR is NP-complete. The results always show that the more general setting SCV is considerably more challenging than approval-based committee voting. 
We then formalize a weakening of SW-JR called weak-SW-JR and 
present a polynomial-time algorithm that finds a committee that \emph{simultaneously} satisfies weak-SW-JR and IW-JR.  
We also propose two natural generalizations of PAV (Proportional Approval Voting), a well-known rule for committee voting under approvals. However we show that neither of these two extensions satisfies both weak-SW-JR and IW-JR.
 


%

\section{Sub-committee Voting}

We propose a new setting called SCV that generalizes a number of voting models. The setting is a tuple $(N,C,\pi,q,\pref)$

\begin{itemize}
	\item $N=\{1,\ldots n\}$ is the set of voters/agents. 
	\item $C=\{c_1,\ldots, c_m\}$ is the set of candidates.
	\item $\pi=\{C_1,\ldots, C_{\ell}\}$ is a partitioning of the candidates. Each $C_j$ is referred to
	as a \emph{candidate subset} from which a \emph{sub-committee} is to be chosen.
	\item $q$ is the quota function that specifies the number of candidates $q(C_j)=k_j$ to be selected from each subset $C_j$. 
We denote $\sum_{j=1}^\ell k_j$ by $k$. 
\item ${{\pref}=(\pref_1,\ldots, \pref_n)}$ specifies for each agent $i$, her preferences/utilities over $C$. We allow the possibility that an agent does not compare candidates across candidate subsets. At a minimum it is required that each $\pref_i$ is transitive and complete within each subset $C_j$, however additional restrictions can be introduced, as befitting the setting; they might even be replaced with cardinal utilities
\end{itemize}

An SCV outcome $p$ specifies a real number $p(c)$ for each $c\in C$ with the following constraints:
\begin{align*}
	0\leq p(c)\leq 1&\quad \text{for all } c\in C\\
	\sum_{c\in C}p(c)=k& \quad \text{and  }
	\sum_{c\in C_j}p(c)=k_j \text{ for all } j\in \{1,\ldots, \ell\}.
	\end{align*}
	
In this paper we restrict our attention to discrete outcomes so that $p(c)\in \{0,1\}$ but in general SCV can allow for probabilistic outcomes where $p(c)$ is the probability of selecting candidate $c$.
For discrete outcomes, an outcome $W$ will be a committee of size $k$ that consists of $\ell$ sub-committees $W_1,\ldots, W_{\ell}$ where each $W_j\subseteq C_j$ and $|W_j|=k_j$.

	 If $\ell=1$, $p(c)\in \{0,1\}$ for all $c\in C$ and $k=1$, we are in the \emph{voting setting}. 
	 If $\ell=1$, $p(c)\in \{0,1\}$ for all  $c\in C$, we are in the \emph{committee/multi-winner voting setting}~\citep{FSST17a}. 
	 If $\ell=1$ and $k=1$, we are in the \emph{probabilistic voting setting}~\citep{Gibb77a}. 
	 If $p(c)\in \{0,1\}$ for all $c\in C$ and $k_i=1$ for all $i\in \{1,\ldots, \ell\}$, we are in the \emph{public decision making setting}~\citep{CFS17a}. Note that public decision making setting is equivalent to the ``voting on combinatorial domain'' setting studied by \citet{LaXi15a}. The latter setting allows for more complex preferences over the set of combinatorial outcomes but the preferences may not be polynomial in the number of candidates and voters. 
	 If $p(c)\in \{0,1\}$ for all $c\in C$ and $k_i=1$ for all $i\in \{1,\ldots, \ell\}$ and $|C_j|=2$ for all $j\in \{1, \ldots, \ell\}$, we are in the \emph{multiple-referenda setting}~\citep{BKZ97a,LaNi00a,CuLa12a}. 	
	

				\begin{figure}[h!]
					
					\begin{center}
						\scalebox{0.9}{
			\begin{tikzpicture}
				\tikzstyle{pfeil}=[->,>=angle 60, shorten >=1pt,draw]
				\tikzstyle{onlytext}=[]

\node        (SCV) at (-5,6) {\begin{tabular}{c} Sub-committee\\ Voting \end{tabular}};;

	 \node[onlytext] (PDM) at (-5,4) {\begin{tabular}{c} Public\\ Decision\\Making \end{tabular}};
	  \node[onlytext] (MR) at (-5,2) {\begin{tabular}{c} Multiple\\ Referenda \end{tabular}}; 
	 \node[onlytext] (RA) at (-8,2) {\begin{tabular}{c} Resource \\ Allocation \end{tabular}};
	  \node[onlytext] (MWV) at (-1,4) {\begin{tabular}{c} Multi\\ Winner\\Voting \end{tabular}};
	   \node[onlytext] (SW-V) at (-1,2) {\begin{tabular}{c} Single\\ Winner\\Voting \end{tabular}};

  			\draw[<-] (SCV) -- (PDM) ;
			\draw[<-] (SCV) -- (MWV) ;
			\draw[<-] (MWV) -- (SW-V) ;
			\draw[<-] (PDM) -- (SW-V) ;
			\draw[<-] (PDM) -- (RA) ;
				\draw[<-] (PDM) -- (MR) ;
			\end{tikzpicture}
			}
			\end{center}
			
			\caption{Relations between settings. 
			 An arrow from (A) to (B) denotes that setting (A) is a restriction of setting (B).}
			\end{figure}
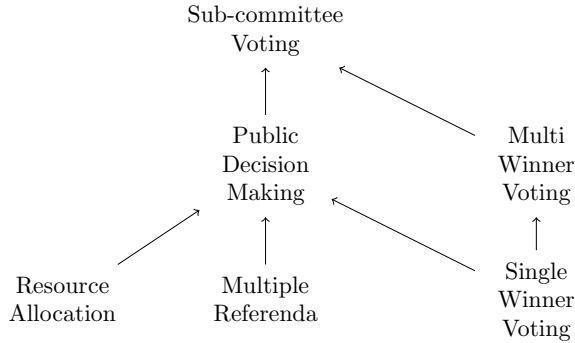

At a very abstract level, even a bicameral legislature can be viewed as an SCV setting in which $\ell=2$. Even if there are no explicit multiple committees, there can be diversity constraints imposed on the committee that can be easily modelled as an SCV problem. For example, the problem of selecting five people with 3 women and 2 men can be viewed as an SCV problem with two candidate subsets.

In this paper, we will focus exclusively on approval-based voting in the SCV setting. In approval-based voting we replace each agent $i$'s  preference $\pref_i$ with an approval ballot $A_i\subseteq C$ which represents the subset of candidates that she approves.  The list $A = (A_1,\ldots, A_n)$ of approval ballots is referred to as the {\em ballot profile}. As per the general SCV setting introduced at the start of the section, the goal is to select a target $k$ number of candidates from $C$ which satisfy the quota function for each candidate subset.

\section{Justified Representation in Approval-based Sub-committee Voting}

%

We now focus on the SCV setting in which each agent approves a subset of the candidates. Based on the approvals, the goal is to identify a fair or representative outcome. Note that if $\ell=1$, we are back in the committee voting setting. The approval-based SCV setting can be seen as capturing $\ell$ independent committee voting settings. 

For the approval-based committee voting setting, justified representation (\jr) is a desirable property.

\begin{definition}[Justified representation (JR)]
Given a ballot profile $A = (A_1, \dots, A_n)$ over a candidate set $C$ and a target committee size $k$,
we say that a set of candidates $W$ of size $|W|=k$ {\em satisfies justified representation 
for $(A, k)$} if $\forall X\subseteq N$:
\[ |X|\geq \frac{n}{k} \text{ and } |\cap_{i\in X}A_i|\geq 1 \implies (|W\cap (\cup_{i\in X}A_i)|\geq 1)\]

\end{definition}

One natural extension of \jr to the case of SCV is to treat each candidate subset as an independent committee voting problem. Then an SCV outcome satisfies 
Intra-wise JR (IW-JR) if each sub-committee satisfies \jr.

\begin{definition}[Intra-wise JR (IW-JR)]
	An SCV outcome $W$ satisfies \emph{Intra-wise JR (IW-JR)} if $\forall X\subseteq N$ and $\forall j\in \{1,\ldots,\ell\}$: 
	\begin{align*}
	& |X|\geq \frac{n}{k_j} \text{ and } |(\cap_{i\in X}A_i)\cap C_j|\geq 1 \\
	&\implies (|W\cap  C_j\cap (\cup_{i\in X}A_i)|\geq 1)
	\end{align*}
	\end{definition}

	We note that since a committee satisfying \jr can always be attained by a polynomial-time algorithm~\citep{ABC+16a}, IW-JR is easy to achieve by treating each sub-committee voting as a separate committee voting problem. 
	
	The limitation of this approach is that it could be that each time the same voters are unrepresented in each sub-committee and they may ask for some representation in at least some sub-committee. Thus IW-JR can be considered as a `local' \jr axiom which ignores whether or not a given voter has already been represented in some other sub-committee.
	
	In view of this limitation, another extension of \jr to the case of SCV is to impose a \jr-type condition across all sub-committees.\footnote{Imposing representation requirements across all sub-committees implicitly assumes that the selections of all sub-committees are of comparable significance to the voters.} The definition of SW-JR is identical to the definition of \jr for the committee voting setting. 

\begin{definition}[Span-wise JR (SW-JR)]
	An SCV outcome $W$ satisfies \emph{Span-wise JR (SW-JR)} if $\forall X\subseteq N$: 
	\[|X|\geq \frac{n}{k} \text{ and } |(\cap_{i\in X}A_i)|\geq 1 \\ \implies (|W \cap (\cup_{i\in X}A_i)|\geq 1)\]
	\end{definition}
	
	 This approach to representation leads to a `global' \jr axiom which aims to represent large, cohesive groups of voters (i.e. $X\subseteq N: \, |X|\ge n/k$ and $\cap_{i\in X} A_i\neq \emptyset$) in some sub-committee, but not necessarily a sub-committee where they are cohesive. 
	 
	Note that both SW-JR and IW-JR concern representation that are not at the level of single individual but at the level of large enough cohesive groups.\footnote{In a related paper, \citet{CFS17a} proposed fairness concepts for Public Decision Making that is equivalent to SCV in which $k_i=1$ for each $i$. They considered different fairness notions that are based on proportional or envy-free allocations. The concepts involve viewing agents independently and are different from proportional representation concerns. When the number of sub-committees is less than the number of voters, the concepts they consider are trivially satisfied.}
	
	Next we show that a SW-JR committee may not exist and is NP-hard to compute. 

%

%
%

\section{(Non)-existence and complexity of SW-JR committees}

We show that a committee satisfying SW-JR may not exist under either of the two restriction (1)  there are exactly two candidate subsets, and (2) $k_i=1$ for all $i\in \{1,\ldots, \ell\}$.

%
%
%

\begin{proposition}
A committee satisfying SW-JR may not exist even if there are exactly two candidate subsets and $k_i=1$ for $i=1,2$. 
\end{proposition}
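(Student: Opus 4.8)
The plan is to exhibit a single explicit instance with exactly two candidate subsets and $k_1=k_2=1$ in which the SW-JR requirements are mutually contradictory, so that no feasible committee can meet all of them. The governing intuition is that with $k=k_1+k_2=2$ the SW-JR threshold is $n/k=n/2$, so any group of $n/2$ voters sharing a commonly approved candidate must receive representation. If I can force two such groups to demand \emph{distinct} candidates drawn from the \emph{same} subset, then --- since only one candidate may be chosen from that subset --- both demands cannot be honoured; and by making the second subset consist solely of candidates nobody approves, its (forced) selection is rendered useless for representation.

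Concretely, I would take an even number of voters $n$ (the case $n=2$ already suffices), set $C_1=\{a_1,a_2\}$ and $C_2=\{b\}$ with $q(C_1)=q(C_2)=1$, and partition $N$ into two blocks $X_1,X_2$ of size $n/2$. Every voter in $X_1$ approves exactly $\{a_1\}$ and every voter in $X_2$ approves exactly $\{a_2\}$, while no voter approves $b$. Any feasible outcome $W$ must then pick one of $a_1,a_2$ from $C_1$ and is forced to take $b$ from $C_2$.

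The verification proceeds by enumerating the cohesive groups. The block $X_1$ satisfies $|X_1|=n/2=n/k$ and $\cap_{i\in X_1}A_i=\{a_1\}\neq\emptyset$, so SW-JR forces $a_1\in W$; symmetrically $X_2$ forces $a_2\in W$. The only remaining point to check is that no other subset imposes a competing constraint: any $X$ containing voters from both blocks has $\cap_{i\in X}A_i=\{a_1\}\cap\{a_2\}=\emptyset$ and hence triggers no requirement, and any proper sub-block of $X_1$ or $X_2$ of size $\geq n/2$ must coincide with the whole block. Thus the complete list of SW-JR demands is exactly $a_1\in W$ together with $a_2\in W$, which is infeasible because $W$ contains only one element of $C_1$; the forced choice $b\in C_2$ helps neither block, since $b$ is unapproved by everyone.

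The construction itself is the only real work here: the single subtlety to get right is that mixing voters across the two blocks destroys cohesiveness (the approval intersection becomes empty), so the two demands genuinely conflict and cannot be reconciled either by a clever choice within $C_1$ or by the inert subset $C_2$. Once the instance is fixed, the argument is immediate.
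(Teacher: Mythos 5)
Your proposal is correct and is essentially the paper's own proof: both exhibit two cohesive groups of size $n/k$ whose approvals demand two distinct candidates from a single subset with quota $1$, while the other subset is approved by nobody and so cannot help (the paper uses exactly $n=2$ with $A_1=\{a_1\}$, $A_2=\{a_2\}$, which is the special case of your construction). The only cosmetic differences are that you allow general even $n$ and make $C_2$ a singleton, neither of which changes the argument.
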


\begin{proof}
Consider an SCV instance where $|N|=n=2$, $C=C_1\cup C_2$ with $C_1=\{a_1, a_2\}$ and $C_2=\{b_1, b_2\}$, and $k_1=k_2=1$. Note that $n/k=1$. 

If the approval ballots are $A_1=\{a_1\}$ and $A_2=\{a_2\}$, then there is no SCV outcome $W$ (i.e. a committee) which satisfies SW-JR. This can be immediately observed since SW-JR requires both voters to be represented, however the quota $k_1=1$ prevents this from being possible.
\end{proof}

The reader may note that above proof utilises an example where the voters have ballots which do not approve of any voter in some candidate subset (i.e. $A_i\cap C_2=\emptyset$). This feature is not required to show the non-existence of an SW-JR committee however, it greatly simplifies the example.

		Above we proved that a committee satisfying SW-JR may not exist. One could still aim to find such a committee whenever it exists. Next we prove that the problem of checking whether a  SW-JR committee exists or not is NP-complete.
	
			\begin{proposition}\label{prop:SWJR-NPC}
				Checking whether an SW-JR committee exists or not is NP-complete.
				\end{proposition}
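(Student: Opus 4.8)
The plan is to establish both membership in NP and NP-hardness.

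For membership, I would observe that a committee $W$ respecting every quota (that is, $|W\cap C_j|=k_j$ for all $j$) serves as a polynomial-size certificate, and that SW-JR can be verified in polynomial time despite its quantification over all $X\subseteq N$. The key point is that a violation, if one exists, can always be localized to a single common candidate: $W$ fails SW-JR if and only if there is some $c\in C$ for which the set $N_c=\{i\in N: c\in A_i \text{ and } A_i\cap W=\emptyset\}$ has $|N_c|\ge n/k$. Indeed, any witnessing group $X$ shares a common candidate $c\in\cap_{i\in X}A_i$ and has every member unrepresented (so $A_i\cap W=\emptyset$ for $i\in X$), whence $X\subseteq N_c$; conversely $N_c$ itself is such a witness, since the union of its members' ballots is disjoint from $W$. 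Verification therefore reduces to computing $N_c$ for each candidate $c\in C$, which is polynomial, so the existence problem lies in NP.

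For hardness, I would reduce from \SAT. Given a formula with variables $x_1,\dots,x_\ell$ and clauses $\phi_1,\dots,\phi_m$, introduce one candidate subset $C_j=\{t_j,f_j\}$ per variable with quota $k_j=1$, so that selecting $t_j$ (respectively $f_j$) encodes setting $x_j$ true (respectively false); a committee is then exactly a truth assignment. Each literal is identified with a candidate ($x_j\mapsto t_j$, $\neg x_j\mapsto f_j$), and for each clause $\phi_r$ I create a block of identical voters whose approval ballot is exactly the set $S_r$ of candidates corresponding to the literals of $\phi_r$. Because the voters of a block share the nonempty common set $S_r$, the block is automatically cohesive, and its union of approvals is precisely $S_r$; hence it is represented if and only if $W\cap S_r\neq\emptyset$, i.e.\ if and only if the committee selects a literal satisfying $\phi_r$. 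To make each block meet the representation threshold I would pad the instance, placing sufficiently many identical voters in each block and, if necessary, adjoining dummy singleton candidate subsets (each a unique forced candidate approved by no voter) to drive $n/k$ down, so that every per-clause block has size at least $n/k$ while contributing nothing to any approval set.

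It remains to argue that SW-JR of the resulting committee is equivalent to satisfiability, and this is where the main care is needed. The forward direction is immediate. For the converse I would invoke monotonicity of representation: enlarging a voter group only enlarges $\cup_{i\in X}A_i$, so larger cohesive groups are weakly easier to represent and the binding constraints come from the minimal ones. Since distinct blocks use disjoint voters and each block already meets the threshold, the cohesive groups with the smallest (hence tightest) approval-union are exactly the per-clause blocks; any group mixing voters from different clauses through a shared candidate has a strictly larger union and is automatically represented once all blocks are. Consequently $W$ satisfies SW-JR if and only if $W\cap S_r\neq\emptyset$ for every $r$, which holds precisely when the encoded assignment satisfies every clause. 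The main obstacle is exactly this control of spurious cohesive groups: one must choose the block sizes and the number of dummy subsets so that the threshold $n/k$ is realized by single-clause blocks and never forces an unintended constraint, and verifying that the padding does not accidentally let an unsatisfiable formula admit an SW-JR committee is the crux of correctness.
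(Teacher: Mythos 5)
Your proof is correct, but it follows a genuinely different route from the paper's. The paper proves hardness by reduction from \textsc{Set Cover}: the ground elements become the voters, one candidate subset $C_2=\{s_1,\dots,s_t\}$ has a candidate per set (approved by exactly the voters that set contains), and a second subset $C_1$ of $n$ never-approved candidates with quota $k_1=n$ inflates $k=n+k'$ so that $n/k<1$; then every single voter is by itself a large cohesive group, and an SW-JR committee exists iff some $k'$ candidates of $C_2$ cover all voters. Your reduction is instead from \textsc{Sat}: the quota-one pairs $\{t_j,f_j\}$ make the committee itself a truth assignment, the clause blocks are the binding cohesive groups, and your dummy singleton subsets play exactly the same threshold-tuning role as the paper's $C_1$. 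Both are sound polynomial reductions. Two remarks. First, your converse direction is simpler than you suggest: if the encoded assignment satisfies every clause, then every voter's ballot (which is exactly the literal set of its clause) meets $W$, so every voter is individually represented and SW-JR holds vacuously --- no monotonicity or minimality analysis of "spurious" cohesive groups is needed; moreover, blocks of a single voter already suffice once you add $\max(0,m-\ell)$ dummy subsets, since enlarging blocks uniformly scales $n/k$ proportionally and only the dummies actually move the threshold relative to block size. Second, the two reductions buy different things: the paper's instance uses only two candidate subsets, whereas yours has every quota equal to one, so you establish NP-completeness even in the public-decision-making restriction of SCV, which is a sharper statement along that axis. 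Finally, your explicit NP-membership argument (a violation of SW-JR can always be localized to a single candidate $c$ with $|\{i\in N : c\in A_i,\ A_i\cap W=\emptyset\}|\ge n/k$) is correct; it is exactly the verification algorithm the paper defers to its later section on testing representation, and including it makes your proof self-contained where the paper leaves membership implicit.
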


			To show that checking whether a SW-JR committee exists is NP-complete we will reduce a given instance of the \textsc{Set Cover} problem, a known NP-complete problem ~\cite{GaJo79a}, to an SCV instance -  such that an SW-JR committee exists if and only if the \textsc{Set Cover} instance has a yes answer.
				
				Below is a statement of the \textsc{Set Cover} problem.

					           \pbDef{\textsc{ Set Cover}}
					  {Ground set $X$ of elements, a collection $\mathcal{L}=\{S_1,\ldots, S_{t}\}$ of subsets of $X$ such that $X=\cup_{j=1}^t S_j$ and an integer $k'$.}
					           			 {Does there exist a $H\subset \mathcal{L}$ such that $|H|\leq k'$ and  $X=\bigcup_{S_j\in H}S_j$}

							 
 To build intuition for the formal proof, which follows, we provide an overview of the reduction.
 
  The \textsc{Set Cover} problem involves answering whether or not there exists a collection of at most $k'$ subsets $H \subseteq \mathcal{L}$ which cover another set $X$. This problem can be embedded into an SCV instance by letting the set $X$ represent the set of voters and considering a candidate subset $C_2$ such that each element denotes an element of $\mathcal{L}$; that is,
  $$C_2=\{s_1, \ldots, s_t\}.$$
We then let each element of $C_2$, say $s_j$, be approved by voters $i\in N=X$ if and only if $i\in S_j$. By appropriately defining quota values and voter approval ballots on the remaining candidate subset $C_1$ it is shown that an SW-JR committee exists if and only if every voter is represented via a candidate in $C_2$ -- this of course possible if and only if we have a yes-instance of the \textsc{Set Cover} problem


  
  

A formal proof is presented below.

	\begin{proof}
		We reduce a given \textsc{Set Cover} instance $(X, \mathcal{L}, k')$ to an SCV instance as follows: Let 
	\begin{align*}
	N&=X=\{1, \ldots, n\}\\
	C&=C_1\cup C_2\\
	&=\{a_1, \ldots, a_n\}\cup\{s_1, \ldots, s_t\} 
	\end{align*} 
	denote the set of voters, candidate set and partition into two candidate subsets. Let voter approval ballots be
	$$A_i=\{s_j | i\in S_j\},$$
	and $k_1=n, k_2=k'$.  Without loss of generality we may assume that $t\ge k'$, also note that $k=n+k'$ and so $n/k<1$. \\
	
	Since every voter has a non-empty approval ballot (i.e. $N=X=\cup_{j=1}^t S_j$) and $n/k<1$, a committee satisfies SW-JR if and only if every voter is represented. 
	
	If $(X, \mathcal{L}, k')$ is a yes-instance of the \textsc{Set Cover} problem then these exists a subset $H$ with $|H|=k'$ such that $N=X=\bigcup_{S_j\in H}S_j$ -- it follows that the committee 
	$$W=C_1\cup\{s_i \,| \,S_i\in H\},$$
	is a solution to the SCV problem and satisfies SW-JR. Conversely, if $W$ is an SW-JR committee then the set
	$$H=\{S_i\, |\, s_i\in W\},$$
	provides a yes-instance to the  \textsc{Set Cover} problem.

	\end{proof}

\section{Weak-SW-JR}

	In the previous section we showed that a committee satisfying SW-JR need not exist, and checking whether it does is NP-complete. Naturally and in the pursuit of a computationally tractable representation axiom we weaken the concept of SW-JR. In this section we present a weak version of SW-JR, appropriately referred to as weak-SW-JR. An SCV outcome satisfying weak-SW-JR is guaranteed to exist and is attainable via a polynomial-time algorithm.

	\begin{definition}[Weak-SW-JR] 
	An SCV outcome $W$ satisfies \emph{weak-Span-wise Justified Representation (weak-SW-JR) }if
	\begin{align*}
	\forall X&\subseteq N: |X|\ge \frac{n}{k} \text{ and } \Big|\Big(\bigcap_{i\in X} A_i\Big)\cap C_j\Big|\ge 1 \quad \text{for all } j\\
	& \implies  \Big|W\cap\Big(\bigcup_{i\in X} A_i\Big)\Big|\ge 1
	\end{align*}
		\end{definition}
		
 Informally speaking, the weak-SW-JR axiom captures the idea that if a ``large", cohesive set of voters unanimously support at least one candidate in each candidate subset then they require representation in some sub-committee. 


 First observe that weak-SW-JR is indeed a (strict) weakening of the SW-JR concept. 

	\begin{proposition}
	SW-JR implies Weak-SW-JR. But weak-SW-JR does not imply SW-JR.
	\end{proposition}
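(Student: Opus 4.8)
The plan is to prove the two claims separately. For the forward direction (SW-JR implies weak-SW-JR), I would observe that the hypothesis of weak-SW-JR is logically stronger than that of SW-JR, while the two conclusions are identical. Concretely, suppose $W$ satisfies SW-JR and let $X\subseteq N$ satisfy the weak-SW-JR premise, namely $|X|\ge n/k$ and $|(\bigcap_{i\in X}A_i)\cap C_j|\ge 1$ for every $j$. Since the partition $\pi$ is nonempty, fixing any single $j$ already shows that $\bigcap_{i\in X}A_i$ is itself nonempty, so $X$ also satisfies the SW-JR premise. Applying SW-JR to this $X$ yields $|W\cap(\bigcup_{i\in X}A_i)|\ge 1$, which is exactly the weak-SW-JR conclusion. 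Hence $W$ satisfies weak-SW-JR.

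For the separation (weak-SW-JR does not imply SW-JR), I would produce a single counterexample, and the cleanest choice is to reuse the instance from the non-existence proof above: $n=2$, $C_1=\{a_1,a_2\}$, $C_2=\{b_1,b_2\}$, $k_1=k_2=1$ (so $n/k=1$), with ballots $A_1=\{a_1\}$ and $A_2=\{a_2\}$. The decisive point to verify is that no set $X$ ever triggers the weak-SW-JR requirement here, which I would check by a short case analysis over candidate sets $X$: the singletons $\{1\}$ and $\{2\}$ fail the premise because each voter approves no candidate in $C_2$, so $(\bigcap_{i\in X}A_i)\cap C_2=\emptyset$; the full set $\{1,2\}$ fails because $A_1\cap A_2=\emptyset$; and the empty set fails because $0<n/k$. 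Since no $X$ satisfies the weak-SW-JR premise, every feasible committee satisfies weak-SW-JR vacuously. It then remains to exhibit one such committee violating SW-JR: taking $W=\{a_1,b_1\}$, the set $X=\{2\}$ satisfies the SW-JR premise ($|X|\ge 1$ and $\bigcap_{i\in X}A_i=\{a_2\}\ne\emptyset$) yet $W\cap\{a_2\}=\emptyset$, completing the separation.

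I do not anticipate a genuine obstacle: the forward direction is essentially a one-line comparison of hypotheses, and the separation rides entirely on choosing an instance in which each voter's ballot is confined to a single candidate subset, which is precisely what makes the stronger weak-SW-JR premise impossible to meet. The only step requiring care is the exhaustive (but very short) verification that no $X$ triggers weak-SW-JR, so that the vacuous-satisfaction argument is airtight; everything else is routine.
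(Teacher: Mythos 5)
Your proposal is correct. On the implication SW-JR $\Rightarrow$ weak-SW-JR, your argument is logically the same as the paper's: both rest on the inclusion $\bigl(\bigcap_{i\in X}A_i\bigr)\cap C_j \subseteq \bigcap_{i\in X}A_i$, showing the weak-SW-JR premise is stronger while the conclusions coincide; the paper phrases this as a contrapositive, you phrase it directly, which is an immaterial difference. Where you genuinely diverge is the separation. The paper does not give a self-contained counterexample; it points to the 12-voter instance in the proof of Proposition~\ref{prop-relations}, where weak-SW-JR imposes a nontrivial constraint (the group $\{7,8,9,10\}$ must be represented) yet a committee such as $\{c_1,b,c\}$ satisfies it while failing SW-JR. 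You instead recycle the 2-voter non-existence instance ($A_1=\{a_1\}$, $A_2=\{a_2\}$, $k_1=k_2=1$), in which weak-SW-JR is \emph{vacuously} satisfied by every feasible committee because each voter's ballot is confined to a single candidate subset, so no group can meet the premise; then any committee, e.g.\ $W=\{a_1,b_1\}$, witnesses failure of SW-JR via $X=\{2\}$. Your example is more minimal and self-contained, and your case analysis over all $X$ is airtight; the paper's choice has the complementary virtue of showing the two axioms differ even on instances where weak-SW-JR is not vacuous (and where an SW-JR committee actually exists), which is arguably a stronger form of separation. Both are valid proofs of the stated proposition.
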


	\begin{proof}
	We prove the proposition via the contrapositive, suppose weak-SW-JR does not hold. Then there exists a set $X\subseteq N$ such that $|X|\ge n/k$ with $A_i\cap W=\emptyset$ for all $i\in X$ and $(\bigcap_{i\in X} A_i )\cap C_j\neq \emptyset\,\,$ for all $j=1,\ldots, \ell.$
	But 
	$$\emptyset \neq \Bigg(\bigcap_{i\in X} A_i \Bigg)\cap C_j\subseteq \Bigg(\bigcap_{i\in X} A_i \Bigg)$$
	and so SW-JR does not hold.

 The second claim can be easily observed from the definition and simple counter examples can be constructed (for an example see within the proof of Proposition~\ref{prop-relations} in the supplement material).
	\end{proof}

The next proposition states that for a given SCV instance there may be three distinct committees $W$ satisfying, respectively, weak-SW-JR but not IW-JR, IW-JR but not weak-SW-JR, and both weak-SW-JR and IW-JR simultaneously. That is, the weak-SW-JR and IW-JR representation axioms are distinct but are not mutually exclusive (the proof can be found in the supplement material). 

	\begin{proposition}\label{prop-relations}
	Weak-SW-JR does not imply IW-JR and IW-JR does not imply weak-SW-JR. Also weak-SW-JR and IW-JR are not mutually exclusive concepts.
	\end{proposition}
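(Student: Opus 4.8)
The plan is to establish all three claims at once by exhibiting a single SCV instance with two candidate subsets, together with three committees, each witnessing one of the claims. The conceptual lever is the asymmetry between the two axioms: IW-JR triggers on groups of size at least $n/k_j$ that are cohesive \emph{within} a single subset $C_j$ and demands representation \emph{inside that subset}, whereas weak-SW-JR triggers at the smaller threshold $n/k$ but only for groups that are cohesive \emph{across all} subsets simultaneously, and then allows representation \emph{anywhere}. Since $k_j<k$ we have $n/k<n/k_j$, leaving room for a group to be relevant to exactly one of the two axioms.

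Concretely, I would split the shared voter set into two gadgets. A group $P$ will all approve a common candidate, say $a_1\in C_1$, with $|P|\ge n/k_1$ so that $P$ triggers IW-JR in $C_1$; but the voters of $P$ will disagree on $C_2$, so that $\bigl(\bigcap_{i\in P}A_i\bigr)\cap C_2=\emptyset$ and $P$ is invisible to weak-SW-JR. A second group $Q$ will approve a common candidate in \emph{each} subset (say $a_2\in C_1$ and $b_2\in C_2$), so that $Q$ is cohesive across both subsets and triggers weak-SW-JR as soon as $|Q|\ge n/k$; by keeping $|Q|<n/k_1$ and $|Q|<n/k_2$ the group $Q$ never triggers IW-JR. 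Asking $|Q|\ge n/k$ and $|Q|<\min_j n/k_j$ simultaneously is exactly what the gap $n/k<n/k_j$ buys us; a convenient choice is $n=5$, $k_1=2$, $k_2=1$ (so $n/k=5/3$, $n/k_1=5/2$, $n/k_2=5$) with $P=\{1,2,3\}$ and $Q=\{4,5\}$, the former disagreeing on $C_2$ and the latter jointly approving $\{a_2,b_2\}$.

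With the triggering groups pinned down, the three committees come from toggling whether $P$'s common candidate $a_1$ and $Q$'s common candidates $a_2,b_2$ are selected, subject to the quotas $|W_1|=k_1$ and $|W_2|=k_2$. Taking $C_1$ large enough to contain a spare candidate $a_3$ lets $W_1$ drop any single prescribed candidate while keeping size $k_1$. I would then read off: a committee selecting $b_1$ and $a_2$ but omitting $a_1$ meets every weak-SW-JR demand yet fails the IW-JR demand of $P$; a committee selecting $a_1$ but avoiding both of $a_2,b_2$ satisfies IW-JR yet leaves $Q$ unrepresented and hence fails weak-SW-JR; and a committee selecting $a_1$ together with $a_2$ satisfies both. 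This simultaneously proves the two non-implications and the non-mutual-exclusivity.

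The main obstacle is not any single inequality but the exhaustive bookkeeping: I must verify that \emph{no unintended} subset of voters triggers either axiom, in particular that mixed $P$--$Q$ groups do not accidentally become cohesive within a subset (imposing an IW-JR demand) or across subsets (a weak-SW-JR demand). The clean way to rule this out is to make $P$'s and $Q$'s approvals disjoint in the $C_1$-coordinate ($a_1$ versus $a_2$), so that any group meeting both gadgets has empty intersection in $C_1$ and is therefore cohesive, if at all, only within $C_2$ and only below the relevant threshold. This isolates the two gadgets and reduces the verification to the handful of within-gadget groups, after which the three committees above can be checked directly.
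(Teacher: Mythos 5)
Your proposal is correct and takes essentially the same approach as the paper's proof: a single two-subset instance built from one group that is cohesive within a subset but disagrees across subsets (triggering only IW-JR) and one group cohesive across both subsets at a size between $n/k$ and $\min_j n/k_j$ (triggering only weak-SW-JR), followed by three committees that toggle which demands are met. The differences are only parametric (your $n=5$, $k_1=2$, $k_2=1$ versus the paper's $n=12$, $k_1=1$, $k_2=2$), and your bookkeeping for unintended triggering groups checks out.
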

		\begin{proof}
		We provide an example of an SCV instance which admits three distinct SCV outcomes which satisfy
		\begin{enumerate}
		\item weak-SW-JR but not IW-JR
		\item IW-JR but not weak-SW-JR
		\item both weak-SW-JR and IW-JR.
		\end{enumerate}

		\noindent Consider the SCV with $|N|=12$,
		$$C=C_1\cup C_2,$$
		 where $C_1=\{c_1, \ldots, c_7\}$, $C_2=\{a, b, c\}$, $k_1=1$ and $k_2=2$. Let the approval ballots of each voter be as follows
		\begin{align*}
		A_i&=\begin{cases}
		\{c_i, a\} &\text{if $i\in \{1, \,2,\,3,\,4,\,5,\, 6\}$}\\
		\{c_7, b\} &\text{if $i\in\{7,\, 8,\,9,\,10\}$}\\
		\{c_8, c\} &\text{if $i\in\{11, 12\}$}
		\end{cases}
		\end{align*}

		First observe that for an SCV outcome $W$ to satisfy weak-SW-JR the only set of voters which must be represented is $X=\{7,8,9,10\}$ since they are a the only group of size greater or equal to $n/k=4$ who unanimously support a voter in each of the candidate subsets i.e. $c_7\in C_1$ and $b\in C_2$ are approved by every voter $i\in X$. 

		 Whilst, for an SCV outcome to satisfy IW-JR it is required that the group $X'=\{1, 2, \ldots, 6\}$ are represented in $C_2$ by candidate $a\in C_2$. This is because $X'$ is the only group of size greater or equal to $n/k_2=6$ who unanimously support a candidate in $C_2$.  
 
	Thus it follows immediately that there exists three distinct SCV outcomes which satisfy the three properties stated at the beginning of this proof, namely;
		$$W^{\textrm{weak-SW-JR}}=\{c_1, b, c\}$$
		satisfies weak-SW-JR but not IW-JR,
		$$W^{\textrm{IW-JR}}=\{c_1, a,c\}$$
		satisfies IW-JR but not weak-SW-JR and,
		$$W^{\textrm{weak-SW-JR and IW-JR}}=\{c_7, a, c\}.$$
	        satisfies both IW-JR and weak-SW-JR.
		\end{proof}

\section{An Algorithm for weak-SW-JR and IW-JR}

	The previous sections have introduced two appealing representation axioms, weak-SW-JR and IW-JR, which capture distinct notions of representation or fairness. The former axiom considers the structure of approvals across all candidate subsets, whilst the latter axiom considers each candidate subset as an independent event. This section will combine these two axioms and consider SCV outcomes $W$ which satisfy both weak-SW-JR and IW-JR. We prove that a committee satisfying both weak-SW-JR and IW-JR is guaranteed to exists for every SCV setting and such a committee can be computed in polynomial-time.
	
	We begin by presenting the following intermediate fact before providing a constructive existence proof of a committee which satisfies weak-SW-JR and IW-JR.


	\begin{lemma}\label{algo-lemma}
	Let $\{k_j\}_{j=1}^\ell$ be a sequence of positive numbers and $\{k_j'\}_{j=1}^\ell$ be a sequence of non-negative numbers. Then,
	$$\sum_{j=1}^\ell \frac{k_j'}{\sum_{i=1}^\ell k_i} \le \max_{j\in [\ell]} \Big\{\frac{k_j'}{k_j}\Big\}.$$
	\end{lemma}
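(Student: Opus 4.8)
The plan is to recognize this as the standard fact that a ratio of sums (equivalently, a weighted average of the individual ratios $k_j'/k_j$ with weights $k_j$) cannot exceed the largest individual ratio. The first observation is that the summation index $j$ on the left-hand side does not touch the denominator $\sum_{i=1}^\ell k_i$, which is a fixed positive constant; let me write $K = \sum_{i=1}^\ell k_i$. Then the left-hand side collapses to $\left(\sum_{j=1}^\ell k_j'\right)/K$, so the claim is the mediant-type inequality
\[
\frac{\sum_{j=1}^\ell k_j'}{\sum_{i=1}^\ell k_i} \le \max_{j\in[\ell]} \frac{k_j'}{k_j}.
\]

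Next I would set $M = \max_{j\in[\ell]} k_j'/k_j$ and use it to bound each numerator term. Since every $k_j$ is strictly positive, the defining inequality $k_j'/k_j \le M$ can be cleared of its denominator without flipping, giving $k_j' \le M\,k_j$ for each $j$. Summing these $\ell$ inequalities yields $\sum_{j=1}^\ell k_j' \le M \sum_{j=1}^\ell k_j = M\,K$. Finally, dividing through by $K > 0$ (positive because it is a sum of positive $k_i$, and $\ell \ge 1$) gives $\left(\sum_{j=1}^\ell k_j'\right)/K \le M$, which is exactly the desired bound.

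Honestly, there is no real obstacle here: the only things to keep straight are that the $k_j$ must be positive for the term-by-term clearing of denominators to preserve the inequality direction (which is guaranteed by hypothesis), and that the non-negativity of the $k_j'$ is what makes the ratios well-behaved. The single point worth stating explicitly is the pulling-out of the constant denominator in the first step, since that is what turns the superficially two-sided statement into the one-line weighted-average argument.
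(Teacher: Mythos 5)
Your proof is correct and is essentially the same argument as the paper's: the paper multiplies each inequality $k_j'/k_j \le M$ by the weight $w_j = k_j/\sum_i k_i$ and sums, which is algebraically identical to your clearing of denominators ($k_j' \le M k_j$), summing, and dividing by $K$. The only cosmetic difference is the order of operations, so nothing of substance separates the two.
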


	\begin{proof}
	Let $M=\max_{j\in [\ell]} \Big\{\frac{k_j'}{k_j}\Big\}$ and let $w_j=\frac{k_j}{\sum_i k_i}$, then
	
	
	\begin{align*}
	M&= \sum_j w_j M\ge \sum_j w_j \frac{k_j'}{k_j}
	=\sum_j \frac{k_j'}{\sum_i k_i}, \quad \text{as required.}
	\end{align*}
	\end{proof}

							\begin{algorithm}[h!]
								  \caption{Algorithm that returns a committee which satisfies both weak-SW-JR and IW-JR. }
								  \label{algo:Weak SW-JR-and-IW-JR}

								\begin{algorithmic}
									\REQUIRE  SCV instance $(N,C,\pi,q,\pref)$
									\ENSURE Committee $W$ that satisfies both weak-SW-JR and IW-JR.
								\end{algorithmic}
								\begin{algorithmic}[1]
									\STATE $W_j=\emptyset$ for $j=1, \ldots, \ell$, $W=\bigcup_{j} W_j$
									\STATE For each candidate subset $j=1, \ldots, \ell$, allocate the candidates with support $\ge n/k_j$ from highest to lowest in support, removing the support of voters who are already represented in $W_j=W\cap C_j$. 
									\STATE From the remaining positions, consider unelected candidates who have support $\ge n/k$ among the unrepresented voters. Allocate these from highest to lowest in support removing the support of voters who are already represented in $W$.
									\STATE If there are remaining positions allocate in any way.
%
%
%
%
								\end{algorithmic}
							\end{algorithm}

	\begin{proposition}
	A committee which satisfies both weak-SW-JR and IW-JR always exists and can be attained via Algorithm~\ref{algo:Weak SW-JR-and-IW-JR}.
	\end{proposition}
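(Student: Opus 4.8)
The plan is to verify that the committee $W$ returned by Algorithm~\ref{algo:Weak SW-JR-and-IW-JR} satisfies both axioms, and separately to check that the algorithm is well-defined (never tries to select more than $k_j$ candidates from $C_j$); the latter feasibility check is where I expect the main work to lie, and where Lemma~\ref{algo-lemma} is needed.

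First I would establish IW-JR. Fix a candidate subset $C_j$ and a group $X$ with $|X|\ge n/k_j$ and $(\bigcap_{i\in X}A_i)\cap C_j\neq\emptyset$. Any candidate $c$ in this nonempty intersection is approved by all of $X$, so $c$ has support at least $|X|\ge n/k_j$ within $C_j$. In Step~2 the algorithm greedily fills $W_j$ with candidates of support $\ge n/k_j$, deleting the support of already-represented voters after each pick. I would argue that when Step~2 terminates for subset $j$, no voter in $X$ can remain unrepresented: if some $i\in X$ were unrepresented, then $c$ would still have its full support from $X$ undeleted, hence support $\ge n/k_j$, so the greedy rule would not yet have stopped — contradiction (this mirrors the standard greedy argument that Greedy Approval Voting achieves \jr, as in~\citep{ABC+16a}). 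Thus $W\cap C_j\cap(\bigcup_{i\in X}A_i)\neq\emptyset$, giving IW-JR.

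Next I would establish weak-SW-JR. Take $X$ with $|X|\ge n/k$ and $(\bigcap_{i\in X}A_i)\cap C_j\neq\emptyset$ for \emph{every} $j$. If any voter in $X$ is already represented after Step~2, we are done. Otherwise every $i\in X$ is unrepresented entering Step~3; picking any $c\in(\bigcap_{i\in X}A_i)\cap C_j$ for some $j$, this $c$ is approved by all of $X$ and so has support $\ge n/k$ among the currently unrepresented voters. Hence $c$ is a candidate Step~3 is eligible to select, and by the same greedy/termination argument Step~3 cannot finish while some $i\in X$ stays unrepresented (its witness candidate would still be eligible). This yields $W\cap(\bigcup_{i\in X}A_i)\neq\emptyset$.

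The key obstacle — and the reason Lemma~\ref{algo-lemma} appears — is showing the algorithm never overflows a quota, i.e. that the candidates the two greedy stages \emph{wish} to select fit within the available slots. For each $j$, let $k_j'$ be the number of IW-JR-forced picks in $C_j$ from Step~2; the standard counting bound (each forced pick consumes a disjoint block of $\ge n/k_j$ voters' support) gives $k_j'\le k_j$, so Step~2 is individually feasible. For Step~3, the number of globally-forced picks is bounded by the unrepresented population divided by $n/k$; I would bound the total demand on the remaining $\sum_j(k_j-k_j')$ slots by invoking Lemma~\ref{algo-lemma} with these $k_j,k_j'$, which controls $\sum_j k_j'/k$ by $\max_j k_j'/k_j\le 1$ and thereby certifies that the residual global selections fit. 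Once feasibility is confirmed, Step~4 harmlessly fills any leftover positions, and combining the three parts gives a committee of the correct size $k$ satisfying both axioms, proving the proposition.
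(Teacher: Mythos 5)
Your proposal is correct and follows essentially the same route as the paper's proof: IW-JR via the counting that each Step-2 pick covers at least $n/k_j$ fresh voters (so at most $k_j$ picks are ever needed per subset), and weak-SW-JR by bounding the post-Step-2 unrepresented population, invoking Lemma~\ref{algo-lemma} to show the $k-\sum_j k_j'$ leftover slots cover all remaining problem groups, and using the fact that any weak-SW-JR problem set has a witness candidate in \emph{every} subset so per-subset quotas never obstruct Step 3. The paper phrases this as a direct count of ``mutually exclusive unrepresented groups'' rather than as an algorithm-feasibility check, but the two framings are the same argument.
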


	\begin{proof}

Consider Algorithm~\ref{algo:Weak SW-JR-and-IW-JR}. We argue that the committee $W$ returned by Algorithm~\ref{algo:Weak SW-JR-and-IW-JR} satisfies both weak-SW-JR and IW-JR.

\bigskip

	First we show that IW-JR is satisfied. Suppose not, then during step 2 for some $j$ we allocated $k_j$ winning spots but failed to represent a group, say $X\subseteq N$, of size at least $n/k_j$ who unanimously supports some candidate(s) in $C_j$. However, at each stage at least $n/k_j$ additional voters are represented and so it must be the case that at least $k_j n/k_j=n$ voters were represented in $W$. That is, all voters have been represented which contradicts the existence of the set $X$. Thus IW-JR is always satisfied.



 Now we show that weak-SW-JR is also satisfied. Suppose that after step 2, $|W_j|=k_j'<k_j$ for all $j$ -- if this were not the case then weak-SW-JR is trivially satisfied since all voters would then be represented in $W$. The proof now reduces to showing that there are enough `places' left in $W$ after allocating the $\sum_j k_j'$ places in Step 2.
	
In the `worst case' every allocation for each $j$ represents the same subset of voters -- in this case $W$ represents $\ge R:= \max_j \{k_j' \frac{n}{k_j}\}$ voters with $\sum_j k_j'$ elected candidates. But then there are
	$$\le \frac{n-R}{n/k} =\frac{n-n\max_j\{\frac{k_j'}{k_j}\}}{n/k}=k\Big(1-\max_j\big\{\frac{k_j'}{k_j}\big\}\Big)$$
	possible mutually exclusive groups of size $\ge n/k$ which are unrepresented in $W$. In the worst case, each of these groups would unanimously support a different candidate in every candidate subset and so correspond to a problem set with respect to weak-SW-JR. Recall that we have $k-\sum_j k_j'$ winning spots left and so suffices to show that
	\begin{align}\label{applylem1}
	k-\sum_j k_j' \ge k\Big(1-\max_j\big\{\frac{k_j'}{k_j}\big\}\Big).
	\end{align}
	Note that we use the property that if problem set $X$ exists they must unanimously support some candidate in every sub-committee and so we can ignore the quota issues. Finally, (\ref{applylem1}) follows immediately from dividing by $k$ and applying Lemma~\ref{algo-lemma}.
	\end{proof}
	


		\section{Testing Representation}

 Testing for SW-JR and IW-JR are easy given the polynomial-time testing of JR. Testing for weak-SW-JR is more involved, we conjecture the complexity is coNP-complete. Before we proceed, we outline the standard approval-based voting setting and an algorithm identified by \citet{ABC+16a} to test JR.
\vspace{1em}


	\noindent \textbf{Polynomial-time algorithm to verify JR}
	
	The standard setting of approval-based voting (AV) is a special case of SCV. In particular, an AV instance is a tuple $(N, C, k, A)$ where $N$ is a set of voters, $C$ is the set of candidates, $k$ is a positive integer and $A$ is an approval ballot profile.  An AV outcome (or committee) is a subset $W\subseteq C$ such that $|W|=k$. Note, that this is simply a special case of SCV when $\ell=1$, $C_1=C$ and $q(C_1)=k$.

 The algorithm proposed by \citet{ABC+16a} to test JR is as follows: given an Approval Voting instance $(N, C, k, A)$ and outcome $W$, for each candidate $c\in C$ compute 
	$$s(c)=\big| \{i\in N\, : \, c\in A_i, \, A_i\cap W=\emptyset\}\big|.$$
	The set $W$ fails to provide JR for $(A, k)$ if and only if there exists a candidate $c$ with $s(c)\ge n/k$.\\

With minor modifications, the above algorithm provides a polynomial-time algorithm to test whether an SCV outcome satisfies SW-JR and IW-JR. 

	\begin{proposition}
	It can be checked in polynomial time whether  a given committee satisfies SW-JR or not.
	\end{proposition}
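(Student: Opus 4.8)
The plan is to adapt the JR-verification algorithm of \citet{ABC+16a} to the SW-JR setting by showing that SW-JR, like JR, can be checked by computing a per-candidate support count and comparing against a fixed threshold. The structural reason this works is that the SW-JR definition is formally identical to the JR definition, except that the relevant committee is the full committee $W$ (spanning all sub-committees) rather than a single committee, and the threshold is $n/k$ rather than $n/k_j$. So I would first argue that the algorithm's correctness proof for JR transfers essentially verbatim once we treat the whole candidate set $C$ as a single ground set and $W$ as a single size-$k$ committee.

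Concretely, given an SCV outcome $W$ and ballot profile $A$, for each candidate $c \in C$ I would compute
\[
s(c) = \big|\{\, i \in N : c \in A_i,\ A_i \cap W = \emptyset \,\}\big|,
\]
the number of voters who approve $c$ and are currently \emph{unrepresented} in $W$. I would then claim that $W$ fails SW-JR if and only if there exists some $c \in C$ with $s(c) \ge n/k$. Computing all the $s(c)$ values takes $O(nm)$ time, and the final comparison is $O(m)$, so the whole procedure is polynomial.

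**The heart of the argument is the biconditional.** For the ``if'' direction, suppose some $c$ has $s(c) \ge n/k$; then the set $X$ of unrepresented voters who approve $c$ has $|X| \ge n/k$ and $c \in \bigcap_{i \in X} A_i$, so $|\bigcap_{i\in X}A_i| \ge 1$, yet by construction $A_i \cap W = \emptyset$ for every $i \in X$, whence $|W \cap (\bigcup_{i\in X}A_i)| = 0$, violating SW-JR. For the ``only if'' direction, suppose SW-JR fails, witnessed by some $X$ with $|X| \ge n/k$, a common approved candidate $c^\ast \in \bigcap_{i\in X}A_i$, and $W \cap (\bigcup_{i\in X}A_i) = \emptyset$. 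The last condition forces $A_i \cap W = \emptyset$ for every $i \in X$, so every voter in $X$ is unrepresented and approves $c^\ast$; therefore $X$ is contained in the set counted by $s(c^\ast)$, giving $s(c^\ast) \ge |X| \ge n/k$.

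**I do not expect any genuine obstacle here**, since the statement is essentially an observation that the \citet{ABC+16a} test applies unchanged after reinterpreting the single-committee setting. The only point requiring a little care is confirming that the witnessing set $X$ can always be taken to consist entirely of voters who are simultaneously unrepresented \emph{and} approve a common candidate — but this is immediate from the definition of SW-JR, because the failure condition $W \cap (\bigcup_{i\in X}A_i)=\emptyset$ already guarantees that no voter in $X$ has an approved candidate in $W$. The analogous statement for IW-JR would then follow by running the same test sub-committee by sub-committee, restricting approvals to $C_j$, using $W_j$ as the committee, and comparing against the threshold $n/k_j$.
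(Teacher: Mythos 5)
Your proposal is correct and matches the paper's approach exactly: the paper's proof is the one-line remark ``Same as the polynomial-time algorithm for testing JR,'' referring to the $s(c)$-counting test of \citet{ABC+16a} applied to the full committee $W$ with threshold $n/k$, which is precisely what you do. Your write-up simply makes explicit the biconditional correctness argument that the paper leaves implicit.
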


	\begin{proof}
	Same as the polynomial-time algorithm for testing JR.
	\end{proof}
	
	\begin{proposition}
	It can be checked in polynomial time whether  a given committee satisfies IW-JR or not.
	\end{proposition}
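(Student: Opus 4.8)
The plan is to reduce testing IW-JR to running the polynomial-time \jr test (described above) once for each candidate subset. First I would observe that the IW-JR condition decouples across the candidate subsets: fixing a single index $j$, the definition of IW-JR restricted to that $j$ is precisely the statement that the sub-committee $W_j := W\cap C_j$ satisfies ordinary \jr for the approval voting instance $(N,\,C_j,\,k_j,\,A|_{C_j})$, where $A|_{C_j}=(A_1\cap C_j,\ldots,A_n\cap C_j)$ is the profile of ballots restricted to $C_j$. Indeed, for a fixed $j$ the premise $|(\cap_{i\in X}A_i)\cap C_j|\ge 1$ says exactly that the voters in $X$ share a commonly approved candidate lying in $C_j$, and the conclusion $|W\cap C_j\cap(\cup_{i\in X}A_i)|\ge 1$ says exactly that some voter in $X$ is represented within $C_j$. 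Hence $W$ satisfies IW-JR if and only if $W_j$ satisfies \jr in the corresponding restricted instance for every $j\in\{1,\ldots,\ell\}$. Since the excerpt has already noted that an AV instance is the special case $\ell=1$ of SCV, each restricted instance is a legitimate \jr-testing instance.

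Second, I would invoke the verification criterion underlying the \jr test, namely that a committee fails \jr exactly when some candidate has sufficiently many unrepresented approvers. Applied to subset $j$, define for each $c\in C_j$
\[
s_j(c)=\big|\{\,i\in N : c\in A_i,\ A_i\cap W_j=\emptyset\,\}\big|,
\]
and note that $W_j$ fails \jr for $(A|_{C_j},k_j)$ if and only if there exists $c\in C_j$ with $s_j(c)\ge n/k_j$. The forward direction takes the witnessing set $X=\{\,i : c\in A_i,\ A_i\cap W_j=\emptyset\,\}$, which is cohesive (all its members approve $c\in C_j$) and unrepresented within $C_j$; the reverse direction takes any commonly approved $c\in(\cap_{i\in X}A_i)\cap C_j$ for a violating set $X$ and observes $s_j(c)\ge|X|\ge n/k_j$. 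This is the same argument that justifies the plain \jr test, simply carried out inside $C_j$.

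Finally, the algorithm computes $s_j(c)$ for every candidate $c$ (each $c$ lies in a unique subset $C_j$ by the partition property of $\pi$) and reports that IW-JR holds if and only if $s_j(c)<n/k_j$ for all $c$. Computing all the scores takes $O(nm)$ time and the subsequent threshold comparisons take $O(m)$ time, so the whole test runs in polynomial time. I do not anticipate a genuine obstacle here: the only point requiring care is the decoupling equivalence of the first step, and this follows directly from the observation that every set, intersection, and union appearing in the IW-JR condition for index $j$ is intersected with $C_j$, so that the condition depends only on the restricted instance.
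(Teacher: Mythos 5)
Your proposal is correct and follows essentially the same route as the paper: reduce IW-JR testing to $\ell$ independent \jr{} tests on the restricted instances $(N, C_j, k_j, A|_{C_j})$ and apply the polynomial-time \jr{} verification algorithm of \citet{ABC+16a} to each. The paper states this decoupling without proof, whereas you additionally spell out the per-subset score $s_j(c)$ and its correctness, which is a more detailed but substantively identical argument.
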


	\begin{proof}
	First note that an SCV outcome $W$ satisfies IW-JR if and only if it satisfies JR for every candidate subset $C_i$ when approvals ballots are restricted to $C_i$. That is, $W$ satisfies IW-JR if and only if the Approval Voting instance $(N, C_i, k_i, A)$ satisfies JR for all $i\in [\ell]$. 
	
	It follows immediately that applying the polynomial-time algorithm to verify that JR is satisfied in each of these $\ell$ Approval Voting instances is also a polynomial-time algorithm. 
		\end{proof}

\section{Generalizing PAV to SCV} 

In the setting of approval-based multi-winner voting, the Proportional Approval Voting (PAV) rule has been extensively studied and shown to satisfy many desirable representation properties. It has been shown in ~\citep{ABC+16a} that PAV committees satisfy JR, though computing a PAV committee is  NP-hard.\footnote{In fact PAV is viewed as one of the most compelling rules for approval-based committee voting because it satisfies EJR a property stronger than JR~\citep{ABC+16a}.}

Under PAV in the standard approval voting setting (AV), each voter who has $j$ of their approved candidates in the committee $W$ is assumed to derive utility of $r(j):=\sum_{p=1}^j 1/p$ if $j>0$ and zero otherwise. The total utility of a committee $W$ is then defined as 
$$\text{PAV}(W)=\sum_{i\in N} r(|W\cap A_i|),$$
this is known as the PAV-score. The PAV rule outputs the committee $W$ of size $k$ which maximizes the PAV-score among all committees of size $k$.



In this section we consider generalizing the PAV rule to the SCV setting. This leads to two distinct PAV rules for the SCV setting which are both natural generalizations. 



 \textbf{Span-wise PAV (SW-PAV)} is a generalization of the PAV rule which assumes voters gain utility solely from the number of their approved candidates in $W$. Thus, each voter $i$ derives utility $r(|W\cap A_i|)$ and the SW-PAV score of a committee $W$ is 
$$\text{SW-PAV}(W)=\sum_{i\in N} r(|W\cap A_i|).$$

\textbf{Intra-wise PAV (IW-PAV)} is a generalization of the PAV rule which assumes voters gain utility from both the number of their approved candidates in $W$ and also the diversity across sub-committees. Thus each voter $i$ derives utility 
$$\rho(|W\cap A_i|):= \sum_{j \in [\ell]} r(|W\cap A_i\cap C_j|),$$
and the IW-PAV score of a committee $W$ is 
$$\text{IW-PAV}(W)=\sum_{i\in N} \rho(|W\cap A_i|)=\sum_{j\in [\ell]} \sum_{i\in N} r(|W\cap A_i\cap C_j|).$$


In both generalizations, a SW-PAV (IW-PAV) committee is defined to be a committee $W$ satisfying the SCV quota conditions and maximizing the SW-PAV (IW-PAV) score.

To illustrate the distinction between SW-PAV and IW-PAV the following example is provided.

\begin{example}
Consider a voter $i$ with approved and elected candidates $\{a, b,c\}$ such that $\{a, b\}\in C_1$ and $c\in C_3$. Then voter $i$'s contribution to the SW-PAV score is
$$r(|W\cap A_i|)=1+\frac{1}{2}+\frac{1}{3}=1\frac{5}{6}.$$
Whilst her contribution to the IW-PAV score is
$$\rho(|W\cap A_i|)=r(2)+r(1)=1+\frac{1}{2}+1=2\frac{1}{2}.$$
\end{example}

\paragraph{\textbf{Generalized Justified Representation under SW-PAV and IW-PAV}}

We consider the representation properties of SCV committee outcomes from the IW-PAV and SW-PAV rules. We show that IW-PAV satisfies IW-JR and SW-PAV satisfies weak-SW-JR, whilst neither satisfies both. In addition, we show that both rules can fail to output an SW-JR committee when such a committee exists.


\begin{proposition}
IW-PAV satisfies IW-JR.
\end{proposition}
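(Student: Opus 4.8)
The plan is to exploit the fact that both the IW-PAV objective and the SCV quota constraints decompose cleanly across candidate subsets. Recall that
$$\text{IW-PAV}(W)=\sum_{j\in[\ell]}\sum_{i\in N} r(|W\cap A_i\cap C_j|).$$
Fix $j$ and write $W_j=W\cap C_j$. Since $A_i\cap C_j$ is just the restriction of the ballot $A_i$ to $C_j$, the inner sum $\sum_{i\in N} r(|W_j\cap A_i|)$ is exactly the ordinary PAV-score of the sub-committee $W_j$ in the Approval Voting instance $(N,C_j,k_j,A)$ obtained by restricting all ballots to $C_j$. The first step, then, is to record this identity and to note that the feasibility constraints are likewise separable: a feasible SCV outcome is precisely a choice of $W_j\subseteq C_j$ with $|W_j|=k_j$ made independently for each $j$, with no coupling between different subsets.

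Given this separability, the second step is to argue that any IW-PAV committee $W$ must restrict to a PAV committee on each subset. Because the total score is a sum of terms, each depending only on the corresponding $W_j$, and because the feasible set is a product of the per-subset feasible sets, a global maximizer is a maximizer coordinate-wise: if some $W_j$ failed to maximize $\sum_{i\in N} r(|W_j\cap A_i|)$ over all size-$k_j$ subsets of $C_j$, we could replace it by a better sub-committee while leaving all other $W_{j'}$ untouched, strictly increasing $\text{IW-PAV}(W)$ and contradicting optimality. Hence each $W_j$ is a PAV committee for the restricted instance $(N,C_j,k_j,A)$.

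The third step invokes the two results already available. By the cited fact that PAV committees satisfy \jr~\citep{ABC+16a}, each $W_j$ satisfies \jr for $(N,C_j,k_j,A)$. Then I appeal to the characterization established in the proof of the IW-JR testing proposition, namely that an SCV outcome satisfies IW-JR if and only if it satisfies \jr on every candidate subset $C_j$ with ballots restricted to $C_j$. Combining these gives that $W$ satisfies IW-JR, completing the argument.

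The main obstacle is making the coordinate-wise optimality argument airtight, i.e.\ justifying that a maximizer of a separable objective over a product feasible set is a per-coordinate maximizer; this is the conceptual crux even though the underlying swap argument is routine. The remaining subtlety is purely bookkeeping: verifying that the restricted PAV-score of $W_j$ coincides term-for-term with the $j$-th block of $\text{IW-PAV}(W)$, which follows immediately from $W\cap A_i\cap C_j = W_j\cap(A_i\cap C_j)$. Non-uniqueness of the per-subset maximizers causes no difficulty, since the argument applies to whichever sub-committees a given IW-PAV committee happens to select.
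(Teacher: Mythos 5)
Your proposal is correct and follows essentially the same route as the paper: the paper's (much terser) proof likewise observes that an IW-PAV maximizer must restrict to a PAV-maximizing committee on each $C_j$ (exactly your separability/coordinate-wise-optimality argument), and then invokes the fact from \citep{ABC+16a} that PAV committees satisfy \jr{} on each subset, which is IW-JR. Your write-up simply makes explicit the decomposition of the objective, the product structure of the feasible set, and the swap argument that the paper leaves implicit.
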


\begin{proof}
Any IW-PAV maximizing committee $W$ must be such that for all $i$ the set $W\cap C_i$ is a PAV maximizing committee in the standard AV setting. Hence JR must be satisfied in each $C_i$, as shown in  ~\citep{ABC+16a}, thus IW-JR is satisfied.
\end{proof}

We now show that SW-PAV satisfies weak-SW-JR, however first we introduce some notation and a lemma.

Let $W$ be an SCV committee and let $c\in W$, define the marginal contribution of $c$ as
$$MC(c, W)=\text{SW-PAV}(W)-\text{SW-PAV}(W-\{c\}).$$


The following lemma was explicitly presented by \citet{AzHu17a} for the standard approval voting setting and was implicitly used in \citep{ABC+16a}. The lemma applies to the SCV setting via an identical argument. We omit the proof and provide a reference.


\begin{lemma}\label{SPAVlemma0}\emph{[\citet{AzHu17a}]}
For any committee $W$ such that $|W|=k$, there exists at least one $c\in W$ such that 
$$MC(c,W)\le |\{i\in N: A_i\cap W\neq \emptyset\}|/k\le n/k.$$
\end{lemma}


\begin{proposition}
SW-PAV satisfies weak-SW-JR.
\end{proposition}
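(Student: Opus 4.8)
The plan is to argue by contradiction through a local swap that exploits the maximality of the SW-PAV committee: I would use Lemma~\ref{SPAVlemma0} to locate a candidate whose removal is cheap, and use the per-subset cohesion condition of weak-SW-JR to find a strictly better replacement in the \emph{same} candidate subset. Suppose $W$ is an SW-PAV committee that violates weak-SW-JR. Then there is a set $X\subseteq N$ with $|X|\ge n/k$ such that $\big(\bigcap_{i\in X}A_i\big)\cap C_j\neq\emptyset$ for every $j\in[\ell]$, yet $A_i\cap W=\emptyset$ for all $i\in X$; in particular every voter in $X$ is unrepresented by $W$.

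First I would apply Lemma~\ref{SPAVlemma0} to $W$ to obtain a candidate $c\in W$ with $MC(c,W)\le |R|/k$, where $R=\{i\in N:A_i\cap W\neq\emptyset\}$ is the set of represented voters. Let $C_j$ be the candidate subset containing $c$. Because the hypothesis of weak-SW-JR holds in \emph{every} subset, there is a candidate $c^\ast\in\big(\bigcap_{i\in X}A_i\big)\cap C_j$; since $A_i\cap W=\emptyset$ for $i\in X$ we have $c^\ast\notin W$, so the swap $W'=(W\setminus\{c\})\cup\{c^\ast\}$ is a legitimate SCV committee — it preserves $|W'_j|=k_j$ and leaves all other sub-committees untouched. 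This within-subset replacement is the crux of the argument, and it is exactly why the \emph{for all} $j$ clause of weak-SW-JR (rather than the single-subset clause of SW-JR) is the right hypothesis: whichever subset the cheap candidate $c$ happens to lie in, cohesion of $X$ in that subset guarantees an eligible substitute that respects the quota. I expect this quota-feasibility of the swap to be the main obstacle, and it is resolved precisely by the strengthened per-subset condition.

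It then remains to compare scores. Writing the net change $\text{SW-PAV}(W')-\text{SW-PAV}(W)$ as the gain from inserting $c^\ast$ into $W\setminus\{c\}$ minus the loss $MC(c,W)$ from deleting $c$, I would bound the gain below by noting that each voter $i\in X$ has $A_i\cap(W\setminus\{c\})=\emptyset$, so inserting $c^\ast\in A_i$ raises $i$'s contribution by $r(1)-r(0)=1$; as all remaining marginal contributions are non-negative, the gain is at least $|X|$. For the loss, since $X\cap R=\emptyset$ we have $|R|\le n-|X|$, whence $MC(c,W)\le (n-|X|)/k$. Thus the net change is at least
\[
|X|-\frac{n-|X|}{k},
\]
which is strictly positive precisely when $(k+1)|X|>n$, i.e.\ $|X|>n/(k+1)$; this is guaranteed by $|X|\ge n/k>n/(k+1)$. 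Hence $\text{SW-PAV}(W')>\text{SW-PAV}(W)$, contradicting the optimality of $W$, so weak-SW-JR must hold.
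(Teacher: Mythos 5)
Your proof is correct and takes essentially the same approach as the paper's: a contradiction argument that invokes Lemma~\ref{SPAVlemma0} to find a candidate $c$ of small marginal contribution, swaps in a commonly approved candidate $c^\ast$ from the same candidate subset (using exactly the per-subset cohesion clause of weak-SW-JR to guarantee a quota-respecting substitute), and shows the SW-PAV score strictly increases. The only cosmetic difference is where strictness enters: the paper establishes $MC(c,W^\ast)<n/k$ at the lemma step (since not all voters can be represented), while you keep $MC(c,W)\le (n-|X|)/k$ and derive strictness from $|X|\ge n/k>n/(k+1)$; both accountings are valid.
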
 

\begin{proof}
Suppose for the purpose of a contradiction that $W^*$ is a SW-PAV committee which does not satisfy weak-SW-JR. That is, there exists a group of unrepresented voters $X$ with $|X|\ge n/k$ such that $A_i\cap W^*=\emptyset$ for all $i\in X$ and for all $j\in [\ell]$ $\bigcap_{i\in X} A_i\cap C_j\neq \emptyset$.

First, note that there must exist a candidate $c\in W^*$ such that $MC(c,W^*)<n/k$. Suppose otherwise, then by Lemma~\ref{SPAVlemma0} it must be that all $n$ voters are represented which contradicts the existence of the set $X$.



Now suppose that $c\in C_j$ for some $j\in [\ell]$ and let $c'\in C_j$ such that $c'\in \bigcap_{i\in X} A_i\cap C_j$. Then it is clear that adding $c'$ to the committee $W^*-\{c\}$ increases the SW-PAV score by at least $n/k$ and so 
$$\text{SW-PAV}\Big((W^*-\{c\})\cup\{c'\}\Big)>\text{SW-PAV}(W^*),$$
which contradicts $W^*$ being a SW-PAV committee. Thus, a SW-PAV committee must satisfy weak-SW-JR.  
\end{proof}

The following proposition shows that both SW-PAV and IW-PAV can fail to produce a SW-JR committee when such a committee exists. The proof illustrates a trade-off between maximizing voter utility and pursuing the representation axiom of SW-JR.


\begin{proposition}
Both SW-PAV and IW-PAV can fail SW-JR when a SW-JR committee exists. 
\end{proposition}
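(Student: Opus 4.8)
The plan is to exhibit a single SCV instance on which both the SW-PAV and IW-PAV rules return a committee violating SW-JR, while a SW-JR committee exists. The guiding idea is that SW-JR is a \emph{global} requirement---every group of size at least $n/k$ with a common approved candidate must be represented---whereas both PAV variants maximise a concave utility and can therefore be induced to \emph{concentrate} representation on a large, highly cohesive group rather than \emph{spread} it to a smaller group that nonetheless qualifies for SW-JR. Concretely, giving a group of size $|A|$ a second approved seat yields marginal PAV utility $|A|\,(r(2)-r(1))=|A|/2$, whereas giving a disjoint group of size $|X|$ its first seat yields $|X|\,(r(1)-r(0))=|X|$; hence both rules strictly prefer the former whenever $|A|>2|X|$.

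First I would isolate the decisive trade-off inside one candidate subset. Take $C_1=\{a_1,a_2,x_1\}$ with quota $k_1=2$, a large group $A$ (size $|A|$) approving $\{a_1,a_2\}$, and a small group $X$ (size $|X|$) approving $\{x_1\}$. Among the three ways to fill the two seats of $C_1$, the choice $\{a_1,a_2\}$ gives total utility $|A|\,r(2)$, while $\{a_1,x_1\}$ gives $|A|\,r(1)+|X|\,r(1)$; the former wins precisely when $|A|>2|X|$, in which case $x_1$ is dropped and $X$ is left unrepresented. Because the voters of $A$ and $X$ approve candidates only in $C_1$, this computation is identical for the global SW-PAV score and for the within-subset IW-PAV score, so both rules make the same bad choice.

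The obstacle---and the reason the construction needs the full SCV setting rather than ordinary committee voting---is the tension between the two defining inequalities. To make $X$ a legitimate SW-JR group we need $|X|\ge n/k$, but to make the PAV rules ignore $X$ we need $|A|>2|X|$; with a single subset ($\ell=1$, $k=k_1$) these would force $|X|\ge|A|$, contradicting $|A|>2|X|$. This is exactly why PAV cannot violate \jr when $\ell=1$~\citep{ABC+16a}. I would resolve this by adding an auxiliary subset $C_2$ (with its own group $D$, whose voters approve only in $C_2$) whose sole purpose is to enlarge $k=k_1+k_2$, driving $n/k$ below $|X|$ so that $X$ qualifies globally while remaining locally negligible (size below $n/k_1$, so IW-JR does not force $x_1$). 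A small explicit instance---for example $|A|=5$, $|X|=2$, $|D|=3$, $k_1=2$, $k_2=3$, giving $n=10$, $k=5$, and $n/k=2=|X|$---realises all of the required inequalities, since $5>2\cdot2$ and $n/k_1=5>2$.

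Finally I would verify the two halves of the claim on this instance. Both rules return $W=\{a_1,a_2\}\cup C_2$, under which $X$ (of size $|X|=n/k$) has no approved candidate elected, so SW-JR fails; and the committee $W^{*}=\{a_1,x_1\}\cup C_2$ represents $A$ (via $a_1$), $X$ (via $x_1$), and $D$ (via its $C_2$ candidate), so---since $A$, $X$, $D$ are pairwise disjoint with pairwise-empty common approvals, leaving no other cohesive group to satisfy---$W^{*}$ satisfies SW-JR. Thus a SW-JR committee exists yet neither PAV rule finds one, which is precisely the trade-off between utility maximisation and SW-JR asserted by the proposition.
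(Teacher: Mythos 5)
Your construction is correct and establishes the proposition, but it is a genuinely different counterexample from the paper's, resting on a different mechanism. You exploit the \emph{concavity} of the PAV weights: with $|A|=5$ voters approving $\{a_1,a_2\}\subseteq C_1$ and $|X|=2$ voters approving $\{x_1\}$, the marginal value of a second seat for $A$ is $|A|\,(r(2)-r(1))=2.5$, which beats the value $|X|=2$ of first representation for $X$, so both rules doubly represent $A$ and abandon $X$; the forced auxiliary subset $C_2$ (with $|C_2|=k_2=3$) merely pads $k$ so that $|X|\ge n/k=2$ and $X$ qualifies under SW-JR, and your existence check for $W^{*}=\{a_1,x_1\}\cup C_2$ is sound since $A$, $X$, $D$ have pairwise disjoint ballots. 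The paper's instance ($k_1=1$, $C_1=\{a,b,c\}$, two cohesive groups of size $5$ and one of size $2$, $n/k=4$) works differently: its PAV-optimal committee $\{a,c_1,b_1\}$ doubly represents no voter, so both PAV scores there reduce to counting represented voters; the conflict is that quota competition in $C_1$ forces any SW-JR committee to spend both $C_2$ seats on private candidates of the losing large group (covering only $7$ voters), whereas the PAV optimum covers $8$ voters---including the SW-JR-exempt pair $\{11,12\}$---while leaving a cohesive size-$4$ subgroup unrepresented. So your example isolates the role of the concave weights $r(\cdot)$ and scales transparently (any $|A|>2|X|$ with suitable padding works), while the paper's shows the conflict persists even when PAV degenerates to plain coverage maximization. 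Two small points for a final write-up: specify $C_2$ and $D$'s ballots explicitly (e.g.\ $C_2=\{d_1,d_2,d_3\}$ with each $D$-voter approving all of $C_2$; since $|C_2|=k_2$ that sub-committee is forced, and your computations then go through verbatim), and note that your remark about IW-JR, while harmless, is not needed since the proposition concerns only SW-JR.
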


\begin{proof}
Consider the following counter-example. Let $N=\{1, \ldots, 12\}$,
$$C=C_1\cup C_2=\{a,b,c\} \cup \{a_1, \ldots, a_5, b_1, \ldots, b_5, c_1\},$$
 and let the approval ballots be 
\begin{align*}
\qquad \qquad \quad A_i&=\{a, a_i\} ~~&&\text{for all } i\in \{1, \ldots, 5\}\\
 \qquad \qquad \quad  A_i&=\{b, b_{i-5}\} ~~&&\text{for all } i\in \{ 6, \ldots, 10\}\\
\qquad  \qquad \quad  A_i&=\{c, c_1\} ~~&&\text{for all } i\in \{11, 12\},
\end{align*}
with quotas $k_1=1$ and $k_2=2$.


For SW-JR to be satisfied we require that at least one voter from each of the groups $X=\{1, 2, 3, 4, 5\}$ and $X'=\{6, 7, 8,9,10\}$ are represented in $W$. Thus, the only committees satisfying SW-JR are of the following form:
$$W_1=\{a, b_i, b_j\} \quad \text{or} \quad W_2=\{b,a_i, a_j\} \quad \text{with $i\neq j$}.$$
Any committee with the form above has an SW-PAV score of 7 and an IW-PAV score of 7. However the committee $W^*=\{a, c_1, b_1\}$ maximizes the SW-PAV and IW-PAV scores, with both equal to 8, and does not satisfy SW-JR. 
\end{proof}

We now present two propositions showing that SW-PAV can fail to produce a committee satisfying IW-JR, and IW-PAV can fail weak-SW-JR. 

The proof of the following proposition highlights the conflict between SW-PAV, which does not incentivize diversity, and IW-JR, which may demand representation of a group already represented in another sub-committee. 


\begin{proposition}
SW-PAV can fail to produce a committee satisfying IW-JR.
\end{proposition}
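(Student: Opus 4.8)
The plan is to exhibit a single SCV instance whose unique SW-PAV committee violates IW-JR, in the same counterexample style as the earlier propositions. The conceptual engine is the diminishing-returns shape of $r$: once a cohesive group $G$ is already represented in one sub-committee, an additional approved candidate for $G$ in another sub-committee contributes only the small marginal $r(2)-r(1)=\tfrac12$ per voter, whereas a candidate covering previously unrepresented voters contributes the full $r(1)=1$. Since SW-PAV is blind to the per-subset structure, it will spend the slots of the second sub-committee on fresh voters and starve $G$ there, even though IW-JR demands that $G$ be represented \emph{within} that sub-committee.

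Concretely I would use two candidate subsets. In $C_2$ I place a candidate $q$ approved by a group $G$; with $k_2=1$ and no competing demand, every SW-PAV committee selects $q$, so $G$ is already represented through $C_2$. In $C_1$ I place the candidate $p$ that $G$ also approves, together with several ``private'' candidates $f_1,\dots,f_t$, each approved by a disjoint block of otherwise-unrepresented voters. The targeted violation is that the $k_1$ slots of $C_1$ go entirely to the $f_t$'s, leaving $G$ unrepresented in $C_1$ while $G$ remains a group of size $\ge n/k_1$ unanimously approving $p\in C_1$, which is exactly an IW-JR failure inside $C_1$.

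The heart of the argument is a short marginal-contribution comparison within $C_1$ after fixing $q$ in $C_2$: including $p$ adds $|G|\cdot(r(2)-r(1))=|G|/2$, while each private candidate adds its block size. I then choose parameters so that filling $C_1$ with private candidates strictly beats any committee containing $p$, verifying this by the single case check ``drop $p$'' versus ``drop one $f_t$''. Simultaneously the IW-JR threshold forces $|G|\ge n/k_1$. Reconciling these is the crux and the main obstacle: if $k_1=1$ or $2$, the threshold $n/k_1$ is so high that $|G|/2$ dominates any block size and SW-PAV would include $p$, satisfying IW-JR. Writing $n$ as the sum of $|G|$ and the fresh blocks shows the two inequalities are jointly satisfiable only once $k_1\ge 3$, so I would instantiate the smallest working choice, e.g.\ $|G|=3$, three private blocks of size $2$, $k_1=3$, $k_2=1$, $n=9$, giving $n/k_1=3=|G|$.

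Finally I would confirm that the committee $\{f_1,f_2,f_3,q\}$ is the unique SW-PAV maximizer (score $9$, against $8.5$ for any committee that contains $p$) and that it fails IW-JR in $C_1$. The only delicate points are checking that $q$ is genuinely forced in $C_2$ and that no alternative tie-break produces an IW-JR committee of equal SW-PAV score; both follow from the strict gap in the score computation.
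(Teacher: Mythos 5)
Your construction is correct, and it is worth recording in detail because it actually repairs a defect in the paper's own example. Your instance: $n=9$ voters split into a group $G$ ($|G|=3$, ballots $\{p,q\}$) and three blocks $B_1,B_2,B_3$ ($|B_j|=2$, ballots $\{f_j\}$), with $C_1=\{p,f_1,f_2,f_3\}$, $k_1=3$, and $C_2=\{q\}$, $k_2=1$. Every quota-feasible committee contains $q$; the committee $\{f_1,f_2,f_3,q\}$ has SW-PAV score $3\,r(1)+6\,r(1)=9$, while any committee containing $p$ must drop some $f_j$ and scores $3\,r(2)+4\,r(1)=8.5$. So the unique SW-PAV committee excludes $p$, yet $G$ has size $3=n/k_1$ and unanimously approves $p\in C_1$: IW-JR fails in $C_1$. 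The side conditions you flag do hold: no union of blocks shares a common approved candidate, so $G$ generates the only IW-JR constraint, and the strict gap $9>8.5$ removes any tie-breaking worry.

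The paper's proof runs on the same conceptual engine as yours---concavity of $r$ makes a second approved candidate for an already-represented group worth only $\frac{1}{2}$ per voter, so SW-PAV spends slots on fresh voters instead---but its printed instance is flawed. It takes $C=\{a,b\}\cup\{a',b'\}\cup\{a'',b'',c''\}$ with quotas $(1,1,2)$ and asserts that the SW-PAV maximizer is $W^*=\{a,a',c''\}$ with score $15\frac{5}{6}$; that set has only one candidate from $C_3$ and so violates the quota $k_3=2$. Recomputing over quota-feasible committees, the maximizer is $\{a,a',a'',c''\}$ (score $17\frac{5}{6}$, beating $17\frac{1}{2}$ for $\{a,a',b'',c''\}$), which contains $a''$---the one candidate IW-JR demands---so the published instance does not witness the proposition at all. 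Your insistence on reconciling the two competing constraints (the IW-JR threshold $|G|\ge n/k_1$ against the marginal comparison $|G|/2<|B_j|$), and your observation that this forces $k_1\ge 3$, is precisely the check whose absence broke the paper's example.
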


\begin{proof}
Let $N=\{1, \ldots, 12\}$, 
$$C=\{a, b\}\cup\{a', b'\}\cup\{a'', b'', c''\},$$
 and let approval ballots be
\begin{align*}
A_i=\begin{cases}
&\{a, a', a''\} ~~\text{if $i\in \{1, \ldots, \, 6\}$}\\
&\{a, a', b''\} ~~\text{if $i\in \{7,\,8\}$}\\
&\{a, a', c''\} ~~\text{if $i\in \{9\}$}\\
&\{b, b', b''\} ~~\text{if $i\in \{10\}$}\\
&\{b, b', c''\} ~~\text{if $i\in \{11, \, 12\}$}\\
\end{cases}
\end{align*}
with quotas $k_1=k_2=1$ and $k_3=2$. 

The only candidate required to be in a candidate subset for IW-JR is candidate $a''\in C_3:= \{a'', b'', c''\}$ since $12/2=6$. However, direct computation shows that a maximal SW-PAV score is $15\frac{5}{6}$ attained from $W^*=\{a, a', c''\}$.
\end{proof}

The following proposition shows that IW-PAV can fail weak-SW-JR. The proof highlights the conflict between IW-PAV, which incentives diversity, and weak-SW-JR, which may require a smaller unrepresented group of voters to be represented. 


\begin{proposition}
IW-PAV can fail to produce a committee satisfying weak-SW-JR
\end{proposition}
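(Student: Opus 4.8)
The plan is to exhibit an explicit counter-example, in the same spirit as the preceding propositions. The guiding observation is that IW-PAV \emph{decomposes} across sub-committees: since both the score $\text{IW-PAV}(W)=\sum_{j\in[\ell]}\sum_{i\in N} r(|W\cap A_i\cap C_j|)$ and the feasibility constraints ($W_j\subseteq C_j$, $|W_j|=k_j$) separate over $j$, any IW-PAV committee is exactly the union of the PAV-maximizing sub-committees computed independently in each restricted instance $(N,C_j,k_j)$. Thus the task reduces to finding a profile where independent PAV maximization in every sub-committee jointly leaves some weak-SW-JR group unrepresented.

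The conflict I would exploit is a threshold gap. Weak-SW-JR protects groups of size at least $n/k$ with $k=\sum_j k_j$, whereas the JR guarantee that PAV enjoys within a sub-committee $C_j$ only fires at size $n/k_j$. Since $k>k_j$, we have $n/k<n/k_j$, so a group $X$ with $n/k\le |X|<n/k_j$ for every $j$ is demanded by weak-SW-JR yet is too small to be protected by JR (hence by PAV) in any single sub-committee. The idea is to make $X$ cohesive across all sub-committees while arranging, within each sub-committee, larger disjoint groups that PAV strictly prefers to serve.

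Concretely I would take $\ell=2$, $n=12$, $k_1=k_2=2$ (so $k=4$, $n/k=3$, and $n/k_j=6$), with $C_1=\{x_1,d_1,d_2\}$ and $C_2=\{x_2,e_1,e_2\}$. Let $X=\{1,2,3\}$ all vote $\{x_1,x_2\}$, let voters $4$--$8$ vote $\{d_1,e_1\}$, and voters $9$--$12$ vote $\{d_2,e_2\}$. Within $C_1$ the candidate supports are $3,5,4$, so the PAV committee is $\{d_1,d_2\}$ (score $9$, strictly beating every committee containing $x_1$, whose best is $8$); symmetrically $\{e_1,e_2\}$ is the PAV committee of $C_2$. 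Hence the unique IW-PAV committee is $W=\{d_1,d_2,e_1,e_2\}$. But $X$ has $|X|=3=n/k$ and is cohesive in both sub-committees ($x_1\in\bigcap_{i\in X}A_i\cap C_1$ and $x_2\in\bigcap_{i\in X}A_i\cap C_2$), while $W\cap\big(\bigcup_{i\in X}A_i\big)=W\cap\{x_1,x_2\}=\emptyset$, so $W$ violates weak-SW-JR.

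The main obstacle --- and essentially the only content --- is verifying the PAV comparisons so that the $X$-candidate is \emph{strictly} excluded in each sub-committee, which pins $W$ down and forces the failure; this is a short finite check of PAV-scores, combined with confirming the threshold arithmetic ($3=n/k$ triggers weak-SW-JR, while $3<6=n/k_j$ leaves $X$ outside PAV's per-sub-committee JR guarantee). Everything else follows immediately from the decomposition of IW-PAV noted at the outset.
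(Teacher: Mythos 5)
Your proof is correct and takes essentially the same approach as the paper: your instance is isomorphic to the paper's counter-example (twelve voters in groups of sizes $5$, $4$, $3$, two candidate subsets with quotas $k_1=k_2=2$, so $n/k=3$), and both arguments pin down the unique IW-PAV committee via per-subset PAV maximization and observe that the size-$3$ cohesive group is left unrepresented. Your explicit decomposition of IW-PAV across sub-committees and the threshold-gap remark ($n/k < n/k_j$) are nice framing but match what the paper uses implicitly.
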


\begin{proof}
Let $N=\{1, \ldots, 12\}$, $C=\{a, b, c\}\cup \{a', b', c'\},$ let approval ballots be 
\begin{align*}
A_i&=\{a, a'\} \quad \text{for all } i\in \{ 1, \ldots, 5\},\\
A_i&=\{b, b'\} \quad\text{for all }~ i\in \{6, \ldots, 9\},\\
A_i&=\{c, c'\} \quad\text{for all }~ i\in \{10, 11,12\},
\end{align*}
with quotas $k_1=k_2=2$ . Note $n/k=3$.

For a weak-SW-JR committee every voter must be represented. However, the IW-PAV committee is $W^*=\{a, b, a', b'\}$ which gives an IW-PAV score of $9\times (1+1)=18$ and does not satisfy weak-SW-JR.
\end{proof}





%
%
%
%

\section{Discussion}

In this paper we formalized a general social choice model called sub-committee voting. 
We focussed on natural generalization of JR from the approval-based committee voting setting to the approval-based SCV setting. 
Some of the results are summarized in Table~\ref{table:summary}.
It will be interesting to consider generalizations of stronger versions of justified representation such as PJR and EJR. For example, IW-JR can straightforwardly be strengthened to IW-PJR or IW-EJR. 

It will be interesting to consider more general preferences that need not be approval-based. 
Several research questions that have been intensely studied in subdomains of SCV apply as well to SCV. For example, it will be interesting to extend axioms and rules for single-winner or multi-winner voting to that of SCV.

						            			\begin{table}[h!]
						            				\centering
						            			\label{tab:compare}
										\scalebox{1}{
						            			\begin{tabular}{lccc}
						            			\toprule
		&Representative &Complexity\\
		&committee &of\\
		&exists&computing\\
		\midrule								SW-JR&No&NP-c\\
										IW-JR&Yes&in P\\
										weak-SW-JR&Yes&in P\\
										IW-JR \& weak-SW-JR&Yes&in P\\
		%
						            			\bottomrule
						            			\end{tabular}
										}
										\caption{Properties of justified representation concepts for sub-committee voting.}
			            			\label{table:summary}
						            			\end{table}


In any combinatorial setting, one can view the voting process as either simultaneous voting or sequential voting~\citep{BaLa16a,LaXi15a,FZC17a,FZC17b}. 
We formalized SCV as a static model in which $\ell$ sub-committees are to be selected simultaneously. The representation notions that we formalized can also be considered if voting over each sub-committee is conducted sequentially over time. The axioms that we consider such as SW-JR apply as well to understand the quality of an outcome in these online or sequential settings. 
	\section*{Acknowledgments}
	Haris Aziz is supported by a Julius Career Award.
	Barton Lee is supported by a Scientia PhD fellowship.

\appendix

%

		\end{document}